\newtheorem{theorem}{Theorem}[section]
\newtheorem{lemma}[theorem]{Lemma}
\newcommand{\gc}{g}
\newcommand{\nc}{\bar \nabla}
\newcommand{\vc}{\mathcal{V}}
\newcommand{\rc}{^{(4)}\mathcal R}
\newcommand{\norm}{\lambda}
\newcommand{\Y}{\omega}
\newcommand{\Dt}{\bar D}
\newcommand{\qt}{\gamma}
\newcommand{\et}{K}
\newcommand{\slicet}{\bar S}
\newcommand{\sliced}{S}
\newcommand{\gt}{h}
\newcommand{\nt}{\nabla}
\newcommand{\vt}{\mathcal{N}}
\newcommand{\rtl}{^{(3)}\mathcal R}
\newcommand{\qd}{q}
\newcommand{\Dd}{D}
\newcommand{\ed}{\chi}
\newcommand{\sd}{\beta}
\newcommand{\ld}{\alpha}
\newcommand{\Nt}{n}
\newcommand{\rd}{{^{(2)}R}}
\newcommand{\rt}{{^{(3)}R}}
\newcommand{\dvq}{\, dV_q}
\newcommand{\Rt}{\mathbb{R}^3}
\newcommand{\Rd}{\mathbb{R}^2}
\newcommand{\ckq}{\mathcal{L}_q}
\newcommand{\ck}{\mathcal{L}}
\newcommand{\Ld}{\Delta}
\newcommand{\Ldt}{{^{(3)}\Delta}}
\newcommand{\Lg}{\Delta_{\gamma}}
\newcommand{\q}{q}
\newcommand{\x}{\sigma}
\renewcommand{\u}{u}
\newcommand{\hkb}[2]{H^{#1}_{#2}}
\newcommand{\oi}[1]{o_\infty(r^{#1})}
\title{Axisymmetric evolution of Einstein equations and
  mass conservation}
\author{Sergio Dain\\
  Facultad de Matem\'atica, Astronom\'{i}a y F\'{i}sica\\
  Universidad Nacional de C\'ordoba\\
  Ciudad Universitaria \\
  (5000)  C\'ordoba\\
  Argentina.\\
  \\
  Max Planck Institute for Gravitational Physics\\
  (Albert Einstein Institute)\\
  Am M\"uhlenberg 1\\
  D-14476 Potsdam\\
  Germany.}
\begin{document}
\maketitle

\begin{abstract}
  For axisymmetric evolution of isolated systems, we prove that there
  exists a gauge such that the total mass can be written as a positive
  definite integral on the spacelike hypersurfaces of the foliation
  and the integral is constant along the evolution. The conserved mass
  integral controls the square of the extrinsic curvature and the
  square of first derivatives of the intrinsic metric.  We also
  discuss  applications of this result for the global
  existence problem in axial symmetry.
\end{abstract}

\section{Introduction}
\label{sec:introduction}

The main unsolved problem in General Relativity is the global
existence of solutions of Einstein's equations that describe the
dynamics of strong gravitational fields.  Symmetries have
traditionally play an important role in this problem. The presence of
a symmetry reduces the degrees of freedom of Einstein equations and
hence simplify considerable its analysis. This is of course useful as
a preliminary step to understand the full problem but also many models
with symmetries have direct physical applications.

In vacuum, due to Birkhoff's theorem, spherical symmetry has no
dynamics. For isolated systems, the next possible model with
symmetries are axially symmetric spacetimes. It has been
proved in \cite{bicak98} that no additional symmetry can be imposed to
the spacetime if we want to keep the gravitational radiation and a
complete null infinity. This result single out axially symmetric
spacetimes as the only models for isolated, dynamical, system with
symmetries. 

There exists many relevant physical models one can study in axial
symmetry: head-on collisions of two black holes, rotating starts and
black holes, critical collapse of gravitational waves. These models
have been studied numerically.  Particularly relevant for the results
presented in this article are the following references in which axial
symmetry is imposed explicitly on the equations using cylindrical
coordinates: \cite{Nakamura87}, \cite{Bardeen83},
\cite{Garfinkle:2000hd}, \cite{Choptuik:2003as}, \cite{Rinne:2005df},
\cite{Rinne:2005sk}, \cite{Ruiz:2007rs}, \cite{Rinne:2008tk}.

From the analytical point of view, there exists up to now no results
for axially symmetric isolated systems (see the review
\cite{Rendall:lrr-2005-6} for results with other kind of symmetries in
cosmologies). One of the purpose of this article is to initiate the
study of this problem. We mention a related
problem: cosmologies with $U(1)$ symmetries. This symmetry was
analyzed in \cite{Choquet-Bruhat01}, \cite{Choquet-Bruhat04}.  The
equations are locally the same as in the axially symmetric case but
the boundary conditions are radically different.  In axial symmetry
the Killing vector vanishes at the axis and that is the main source of
difficulties.

For axially symmetric data, the total ADM mass \cite{Arnowitt62} can
be written as a positive definite volume integral over one spacelike
hypersurface (see \cite{Brill59}, \cite{Gibbons06}, \cite{Dain06c},
\cite{Moncrief:08}, \cite{Chrusciel:2007dd}). This fact is likely to
play a major role in the initial value (see the discussion in
\cite{Dain:2007pk}).  In order to study the implication of this
integral formula for the evolution, the first natural step is to prove
that there exists a gauge such that the mass integral holds not only
at one hypersurface but in a whole foliation and it is conserved along
the evolution. This is the subject of the present article.  To
understand this result, let us review the notion of mass in General
Relativity and its conservation.


The total mass of an isolated system is a boundary integral at
infinity calculated on a given spacelike hypersurface. If we consider
not only one hypersurface but a foliation on the spacetime we can ask
the question whether the mass is conserved along this particular
foliation. That is, whether the boundary integral gives the same
result if it calculated on different slices of the foliation. Any
foliation is determined by the choice of a lapse function and a shift
vector.  If the lapse and shift of the foliation satisfy some fall-off
conditions (this class is called asymptotically flat
gauges), then the mass boundary integral is conserved. This notions
appears naturally in the Hamiltonian formulation of General Relativity
(see \cite{Regge74} \cite{Beig87} \cite{Szabados03}\cite{Szabados06}).

In any physical theory conserved quantities (in particular, conserved
energies) are very important to control the evolution of the
system. However, in General Relativity, the conserved mass appears as
a boundary integral and not as a volume integral (as, for example, in
the wave equation). Hence it is not possible to relate the mass with
any norm of the fields to control the evolution of them (for the wave
equation the energy is precisely the norm of the wave). Axially
symmetric systems represent a remarkable exception. As it was
mentioned above in this case it is possible to write the mass as a
positive definite volume integral. However, this integral formula is
valid only in a particular gauge. Moreover, this gauge can not be
given a priori, it is a dynamical gauge which is prescribed as a
solution of a system of differential equations. The natural question
is if this particular gauge satisfy the fall-off conditions which
guarantee the conservation of the mass.  Our main result is that the
answer to that question is yes. In other words, there exist gauge for
which the mass can be written as a positive definite integral on each
slice of the foliation and this integral is conserved along the
evolution on this particular foliation. Then, in this gauge, the
conserved mass controls the norm of the metric and the second
fundamental form along the evolution and hence this particular gauge
is likely to be the most relevant one to study axially symmetric
isolated systems (we further discuss this point  in section
\ref{sec:final-comments}).

The plan of the article is the following. In section
\ref{sec:main-result} we describe our main result which is given by
theorem \ref{teo:main-result-1}. In section
\ref{sec:axisymm-evol-equat} we review the well known (2+1)+1
formalism and compute the behavior of the fields at infinity and near
the axis. In section \ref{sec:maxim-isoth-gauge} we describe the
gauge. We also derive the mass integral formula in the (2+1)+1
formalism. This derivation is remarkably simpler than the ones
presented so far in the literature. In section
\ref{sec:existence-gauge-mass} we prove the main result. Finally, in
section \ref{sec:final-comments} we discuss the implication of this
result for the evolution problem.
\section{Main result}
\label{sec:main-result}

An \emph{initial data set} for the Einstein vacuum equations is given
by a triple $(\slicet, \qt_{ab}, \et_{ab})$ where $\slicet$ is a connected
3-dimensional manifold, $ \qt_{ab} $ a (positive definite) Riemannian
metric, and $ \et_{ab}$ a symmetric tensor field on $\slicet$, such that the
vacuum constraint equations
\begin{align}
 \label{const1}
   \Dt_b   K^{ab} -  \Dt^a   K= 0,\\
 \label{const2}
   \rt -  K_{ab}   K^{ab}+  K^2=0,
\end{align}
are satisfied on $\slicet$. Where $\Dt$ and $\rt$ are the
Levi-Civita connection and the Ricci scalar associated with
$ \qt_{ab}$, and $  \et =   \et_{ab}   \qt^{ab}$. In
these equations the indices are moved with the metric $ \qt_{ab}$
and its inverse $  \qt^{ab}$. For simplicity, in the present article we
assume  $\slicet=\Rt$. We expect, however, that the results presented
here generalize to asymptotically euclidean manifold with many ends.

We will further assume that the data are \emph{axially symmetric},
which means that there exists a Killing vector field $\eta^a$, i.e;
\begin{equation}
  \label{eq:38}
 \pounds_\eta   \qt_{ab}=0,
\end{equation}
where $\pounds$ denotes the Lie derivative, 
which  has complete periodic orbits and such that
\begin{equation} 
  \label{eq:8c} 
 \pounds_\eta   \et_{ab}=0.
 \end{equation} 
The Killing vector field has associated the following scalars with
respect to the metric $\qt_{ab}$. The norm 
\begin{equation}
  \label{eq:52}
  \norm^2 = \qt_{ab}\eta^a\eta^b,
\end{equation}
and the twist  
\begin{equation}
  \label{eq:55}
  \Y'=\epsilon_{abc}\eta^a\Dt^b\eta^c.
\end{equation}
The prime in the notation of $\Y'$ is justified in section
\ref{sec:axisymm-evol-equat}, where we will show that $\Y'$ is
related to the time derivative of the four dimensional twist potential
$\Y$.

The data are called \emph{asymptotically flat} if there exists a
coordinate system $x^\mu$ ($\mu=1,2,3$) in $\Rt$ such that the metric
components in these coordinates satisfy the fall-off conditions
\begin{equation}
  \label{eq:66}
  \qt_{\mu\nu}= \delta_{\mu\nu} + \oi{-1/2},
\end{equation}
and the extrinsic curvature components
\begin{equation}
  \label{eq:76}
   \quad  \et_{\mu\nu} = o_\infty(r^{-3/2}),
\end{equation}
where we write $f=o_\infty(r^{k})$ if $f$ satisfies $\partial^\alpha
f=o(r^{k-|\alpha|})$, for all $\alpha$. Here $\delta_{\mu\nu}$ is the
euclidean metric in the coordinates $x^\mu$,  $r$ is the euclidean
radius $r=\sqrt{x^\mu x^\nu\delta_{\mu\nu}}$, $\partial $ denotes
partial derivatives with respect to the coordinates and $\alpha$ is a
multindex. 

The data are called \emph{maximal} if the trace of $\et_{ab}$ is equal to
zero, i.e. 
\begin{equation}
  \label{eq:56}
  K=\qt^{ab}\et_{ab}=0.
\end{equation}

Following \cite{Geroch71}, we define  the  two dimensional quotient metric
$\qd_{ab}$  by
\begin{equation}
  \label{eq:51}
  \qd_{ab}= \qt_{ab}-\frac{\eta_a\eta_b}{\norm^2}.
\end{equation}
In \cite{Chrusciel:2007dd} it is proved that for any axially symmetric
and asymptotically flat metric $\qt_{ab}$ there exists a coordinate
system $(\rho,z, \phi)$, which we call the \emph{isothermal
  coordinates}, such that the metric $\qd_{ab}$ and the Killing vector
have the following form
\begin{equation}
  \label{eq:60}
  q =e^{2\u } (d\rho^2+dz^2),\quad \eta^a=\left( \frac{\partial}{\partial
      \phi}\right)^a,  
\end{equation}
and the norm is given  by  
\begin{equation}
  \label{eq:61}
  \norm=\rho e^{\x/2},
\end{equation}
where $\u$ and $\x$ are smooth functions. To these  coordinates we
associated Cartesian coordinates $(x,y,z)$ via the standard formula
$x=\rho \cos \phi $, $y=\rho \sin \phi $. It follows from the results
proved in \cite{Chrusciel:2007dd} that in the  coordinates $(x,y,z)$
the metric $\qt_{ab}$ is asymptotically flat. Hence, without lose of
generality, in the following we take  $x^\mu$ to be   $(x,y,z)$.

In isothermal coordinates we consider the following integral
\begin{equation}
  \label{eq:15}
m= \frac{1}{16}\int_{-\infty}^\infty dz \int_{0}^\infty \left[|\partial \x |^2
+ \frac{e^{2\u} \Y'^2}{\norm^4}  + 2e^{2\u}\et^{ab}\et_{ab}
\right ] \,\rho \, d\rho, 
\end{equation}
where $|\partial \x |^2= \x^2_{,\rho} +\x^2_{,z}$.
Note that the volume element $\rho dz d\rho$ is equivalent to the
volume element in Cartesian coordinates $dxdydz$ where the integration
domain is $\Rt$. We will see in section \ref{sec:maxim-isoth-gauge}
that for asymptotically flat, axially symmetric and
maximal data $m$ defined by (\ref{eq:15}) is the total mass.

The mass integral (\ref{eq:15}) is positive definite, and is zero only
for flat spacetime. The first two terms in the integrand contain the
square of first derivatives of the metric and the third term contains  the
square of the extrinsic curvature.

We want to prescribe gauge conditions for the evolution such that the
integral (\ref{eq:15}) holds not only at the initial surface but in
any surface of the foliation. We prescribe the lapse by the requirement
that the maximal condition \eqref{eq:56} is preserved under
evolution. The shift vector is prescribed by the condition that the
isothermal coordinates are preserved by the evolution. In this way we
obtain a coordinate system $(t,\rho,z,\phi)$, we call it the \emph{maximal
  isothermal gauge}.  These gauge conditions are explicitly given in
section \ref{sec:maxim-isoth-gauge}. This gauge has been used in
numerical evolutions in the following references \cite{Nakamura87},
\cite{Garfinkle:2000hd}, \cite{Choptuik:2003as}, \cite{Rinne:2005df},
\cite{Rinne:2008tk}. This is a dynamical gauge, namely, the lapse and
shift are calculated in terms of the initial data at each step of the
evolution.

The main result of this article is given by the following theorem.
\begin{theorem}
\label{teo:main-result-1} 
  Let $(\Rt,\qt_{ab}, \et_{ab}, \eta^a)$ be an axially symmetric, maximal,
  vacuum, asymptotically flat initial data. Then, there exist a unique
  solution of the maximal-isothermal gauge equations and 
  the mass integral given by \eqref{eq:15} is conserved along the
  evolution, that is
\begin{equation}
  \label{eq:1b}
  \frac{dm(t)}{dt}=0.
\end{equation}
\end{theorem}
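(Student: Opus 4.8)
The plan is to prove two things: first, that the maximal-isothermal gauge equations admit a unique solution, and second, that along the resulting foliation the quantity $m(t)$ defined by \eqref{eq:15} is constant. I expect the bulk of the work — and the main conceptual content — to lie in establishing the conservation law, so let me organize the argument around it.

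For the existence and uniqueness of the gauge, my plan is to spell out the gauge equations explicitly in the $(2+1)+1$ formalism of section \ref{sec:axisymm-evol-equat}. The maximal slicing condition \eqref{eq:56}, required to be preserved in time, yields an elliptic equation for the lapse $\ld$ (schematically $\Ld \ld = (\text{quadratic in } \et)\,\ld$ with appropriate fall-off so that $\ld \to 1$ at infinity); the requirement that the isothermal form \eqref{eq:60}--\eqref{eq:61} of $\qd_{ab}$ be preserved along the flow yields an elliptic system for the shift $\sd$. The strategy is to show these are standard linear elliptic problems on $\Rd$ (after quotienting by the Killing field) with coefficients controlled by the data, and to invoke the known existence theory in the relevant weighted Sobolev spaces $\hkb{s}{\delta}$, using the fall-off \eqref{eq:66}, \eqref{eq:76} and the behavior near the axis derived in section \ref{sec:axisymm-evol-equat} to guarantee solvability and uniqueness with the correct boundary conditions (regularity at the axis $\rho=0$ and asymptotic flatness as $r\to\infty$).

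For the conservation of mass, the plan is to differentiate \eqref{eq:15} directly under the integral sign and show that $\dot m(t)=0$. The key is that in the maximal-isothermal gauge each field in the integrand satisfies an evolution equation driven by the lapse and shift, and the mass integrand is built precisely from the "energy" combination of the reduced wave-map-type system that governs $\x$, $\Y'$ (equivalently the twist potential $\Y$) and the relevant components of $\et_{ab}$. I would compute $\partial_t$ of each of the three terms $|\partial \x|^2$, $e^{2\u}\Y'^2/\norm^4$, and $2e^{2\u}\et^{ab}\et_{ab}$, substitute the evolution equations, and integrate by parts over $\Rd$ with the volume element $\rho\,d\rho\,dz$. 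The expectation is that after integration by parts the bulk terms cancel in pairs (time-derivatives of "kinetic" terms against spatial divergences of "potential" terms, as in the conservation law for a wave map), leaving only boundary contributions.

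The hard part will be controlling these boundary contributions, both at infinity and at the axis $\rho=0$. This is exactly where the global structure of the problem enters and where the choice of gauge is decisive: one must show that the fall-off \eqref{eq:66}, \eqref{eq:76} together with the asymptotic behavior of the dynamically-determined lapse and shift — precisely the asymptotically flat gauge conditions discussed in the introduction — force the boundary flux at infinity to vanish, while the regularity of the fields at the axis (the vanishing of $\eta^a$ there, which was flagged as "the main source of difficulties") forces the axis flux to vanish as well. I would therefore first establish sharp asymptotics for $\ld$ and $\sd$ from their elliptic equations, then verify term-by-term that every boundary integrand decays fast enough at infinity and is regular enough at the axis to contribute nothing, thereby concluding $\dot m(t)=0$ and hence \eqref{eq:1b}.
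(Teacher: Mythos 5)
Your first half (existence and uniqueness of the gauge) is in the same spirit as the paper's Lemma \ref{t:lapse} and Theorem \ref{t:shift}, but your second half contains a genuine gap. You propose to differentiate the volume integral \eqref{eq:15} term by term, substitute the evolution equations, and integrate by parts, expecting the bulk terms to cancel in pairs ``as in the conservation law for a wave map.'' That cancellation does not happen from the evolution equations alone: the time derivative of the mass density is a spatial divergence only \emph{modulo the Hamiltonian constraint}. This is not a technicality --- the paper stresses in its final comments that in a free evolution scheme (where the Hamiltonian constraint is not imposed) the mass is \emph{not} conserved for data violating the constraint, so no identity-level, wave-map-style cancellation can exist. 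The paper's actual route is the reverse of yours: it first uses the Hamiltonian constraint, rewritten in isothermal coordinates as \eqref{eq:46}, to convert the positive-definite volume integral into a pure boundary integral at infinity, \eqref{eq:massbu} (the axis contribution being killed once and for all by $\q(\rho=0)=0$). Conservation then reduces to showing that $\dot\x$ and $\dot\u$ decay like $\oi{-3/2}$, which follows from the evolution relations \eqref{eq:83} and \eqref{eq:22} together with the gauge fall-off; the boundary flux then vanishes in the limit $r\to\infty$. The missing idea in your proposal is precisely this use of the constraint to trade the volume integral for a flux integral, which is what makes the conservation proof short and avoids the bulk bookkeeping you plan to do.

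Two further points where your sketch underestimates the work. First, the decay you need is sharper than generic ``asymptotics for $\ld$ and $\sd$'': the relation \eqref{eq:83} contains the term $2\sd^\rho/\rho$, so one needs the non-obvious estimate $\sd^\rho/\rho=\oi{-3/2}$ (equation \eqref{eq:92}), which the paper extracts from an explicit Green-function expansion of the two-dimensional Poisson equation combined with the parity of the sources in $\rho$ and the smoothness of $\sin(n\theta)/\sin\theta$; a naive bound $\sd^\rho=\oi{-1/2}$ is not enough. Second, ``standard elliptic theory in weighted spaces'' is delicate in two dimensions: decaying solutions of $\Ld \sd^a=F^a$ on $\Rd$ exist only if the sources satisfy the compatibility condition \eqref{eq:74} (otherwise a logarithm appears), which the paper verifies using parity and the divergence structure of $F^z$; and since the gauge condition \eqref{eq:42} is a first-order system while one solves the second-order system \eqref{eq:shift}, one must still show the solution satisfies \eqref{eq:42}, which the paper does via a Cauchy--Riemann/Liouville argument. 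Your outline would need all of these ingredients made explicit before it could close.
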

This theorem is proved is section \ref{sec:maxim-isoth-gauge}. We
divide the proof in lemma \ref{t:lapse}, theorem \ref{t:shift} and
theorem \ref{t:mass}.  For the sake of clarity, in this section we
presented the results using the standard 3+1 decomposition. However,
for axially symmetric spacetimes there exists a more natural
decomposition, the (2+1)+1 formalism described in the section
\ref{sec:axisymm-evol-equat}.  The maximal isothermal gauge conditions
and the mass formula are better expressed in this formalism.

\section{Axisymmetric evolution equations}
\label{sec:axisymm-evol-equat}
In this section we review the (2+1)+1 formalism, which has two
parts. First, a reduction of the field equations by the action of the
symmetry \cite{Geroch71} and second a time plus space decomposition of
the reduced equations (see \cite{Nakamura87}, \cite{Choptuik:2003as},
\cite{Rinne:2005df}). We also give some useful formulas which
relate the quantities in the (2+1)+1 formalism with the natural ones
in the standard 3+1 decomposition.
\subsection{The (2+1)+1 formalism}

Consider a vacuum solution of Einstein's equations, i.e., a four
dimensional manifold $\vc$ with metric $\gc_{ab}$ for which the Ricci
tensor $ \rc_{ab}$ vanishes.
Suppose, in addition, that there exists a spacetime
Killing vector  $\eta^a$. We define the norm and the
twist of $\eta^a$, respectively, by
\begin{equation}
  \label{eq:1}
\lambda^2=\eta^a\eta^b g_{ab}, \quad
\Y_a=\epsilon_{abcd}\eta^b \nc^c\eta^d,  
\end{equation}
where $\nc_a$ is the connection  and $\epsilon_{abcd}$ the volume
element with respect to  $\gc_{ab}$. Assuming that the manifold is
simply connected and using
$\rc_{ab}=0$ it is possible to prove that $\omega_a$ is the
gradient of a scalar field $\Y$
\begin{equation}
  \label{eq:2}
 \omega_a=\nc_a \Y. 
\end{equation}
In our case the Killing field will be spacelike, i.e. $\lambda \geq
0$. 

In the presence of a Killing field, there exists a well known
procedure to reduce the field equations \cite{Geroch71}. Let $\vt$
denote the collection of all trajectories of $\eta^a$, and assume
that it is a differential 3-manifold.  We define the Lorentzian metric
$\gt_{ab}$ on $\vt$ by
\begin{equation}
  \label{eq:3}
\gc_{ab}=\gt_{ab}+ \frac{\eta_a\eta_b}{\norm^2}.
\end{equation}
Einstein vacuum equation are equivalent to the following set of
equations intrinsic to $(\vt, \gt_{ab})$ 
\begin{align}
\nt_a \nt^a \norm  & = -\frac{1}{2\norm^3}\nt^a \Y \nt_a \Y \label{eq:hy1s},\\
\nt_a \nt^a \Y & =\frac{3}{\norm}\nt^a\Y \nt_a \norm  \label{eq:hy2s},\\
\rtl_{ab} &= \frac{1}{\norm}\nt_a\nt_b \norm + \frac{1}{2\norm^4}\left( \nt_a \Y
  \nt_b \Y - \nt_c \Y \nt^c \Y h_{ab}\right). \label{eq:ees} 
\end{align}
where $\nt_a$ and $\rtl_{ab}$ are  the connexion and the Ricci tensor
of $\gt_{ab}$.
The system of equations (\ref{eq:hy1s})--(\ref{eq:ees}) can be
interpreted as Einstein equation in 3-dimensions (equation
(\ref{eq:ees})) coupled with a matter sources given by equations
(\ref{eq:hy1s})--(\ref{eq:hy2s}).

We make a $2+1$ decomposition of $(\vt, \gt_{ab})$ and the field
equation (\ref{eq:ees}). Let $\Nt^a$ be the unit normal vector
orthogonal to a spacelike, 2-dimensional slice $\sliced$.  By
construction we have $\Nt^a\eta_a=0$. And hence the norm defined in
(\ref{eq:1}) is equivalent to (\ref{eq:52}).  The intrinsic metric on
$\sliced$ is denoted by $q_{ab}$ and it is given by
\begin{equation}
  \label{eq:6}
h_{ab}=-n_an_b + q_{ab}.
\end{equation}
Our convention for the signature of  $h_{ab}$ is $(-++)$.  In terms
of the lapse $\ld$ and shift vector $\sd^a$ the line element takes the
form
\begin{equation}
  \label{eq:26}
h = -\ld^2dt^2+ q_{ij}(dx^i+\sd^i dt)(dx^j+\sd^j dt),
\end{equation}
where $i,j=1,2$.  The extrinsic curvature $\ed_{ab}$ of the slices
$\sliced$ is given by \footnote{opposite sign convention with respect to
  \cite{Wald84}}
\begin{equation}
  \label{eq:63}
  \ed_{ab}= - q^c_a \nabla_c n_b= -\frac{1}{2}\pounds_n q_{ab}.
\end{equation}
We write the field equations (\ref{eq:ees}) as evolution equations for
$(\qd_{ab},\ed_{ab})$
\begin{align}
  \label{eq:adm1}
\dot \qd_{ab} &= -2\ld\ed_{ab} + \pounds_\sd \qd_{ab},\\
\dot \ed_{ab} &= \pounds_\sd \ed_{ab}+ \ld\left(\ed
  \ed_{ab}+ \rd_{ab}- ^{(3)}{\bar{\mathcal{R}}}_{ab}-2\ed_{ac}\ed^{c}_{b} \right) - D_aD_b\ld,
\label{eq:adm2}
\end{align}
and constraint equations
\begin{align}
  \label{eq:8}
\rd  -\ed^{ab}\ed_{ab}+\ed^2 &=\mu,\\
 D^a\ed_{ab} -D_b\ed &=J_b.  \label{eq:8b}
\end{align}
Here a dot denotes partial derivative with respect to the $t$
coordinate,  $\rd_{ab}$ is the Ricci tensor and $D_a$ is the
connexion with respect to $q_{ab}$,  $^{(3)}{\bar{\mathcal{R}}}_{ab}$
denotes the projection of $^{(3)}{{\mathcal{R}}}_{ab}$ into $S$ and 
\begin{align}
  \label{eq:5}
\mu &= 2\rtl_{ab} n^a n^b+ \rtl,\\
  \label{eq:36}
  J_b & =-q^c_b n^a {\rtl}_{ca}.
\end{align} 
The scalar $\mu$, which appears as `matter source' in the Hamiltonian
constraint (\ref{eq:5}),   will be relevant in the computation of the
mass.  We
will need its explicit expression
\begin{equation}
  \label{eq:7}
\mu =2 \frac{|\Dd\norm|^2}{\norm^2} +2 \Dd_a \left(\frac{\Dd^a\norm}{\norm}
\right)+2 \frac{\norm'}{\norm}\ed + \frac{1}{2\norm^4}\left( \Y'^2 +
  |\Dd\Y|^2\right) 
,
\end{equation}
where the prime denotes directional derivative with respect to $n^a$, that is
\begin{equation}
  \label{eq:14}
\norm' = n^a\nabla_a \norm=\pounds_n \norm 
\end{equation}
Note that
\begin{equation}
  \label{eq:28}
\norm'=\ld^{-1}(\dot\norm -\pounds_\sd\norm).
\end{equation}
To obtain equation  \eqref{eq:7} we have used  equations
(\ref{eq:hy1s})--~(\ref{eq:hy2s}).  

Finally, it convenient to decompose the extrinsic curvature $\ed_{ab}$ in its
trace $\ed$ and trace free part $k_{ab}$, that is 
\begin{equation}
  \label{eq:59}
  k_{ab}=\ed_{ab}-\qd_{ab}\frac{\ed}{2}.
\end{equation}

\subsection{Relations with the 3+1 decomposition}
Take the definition of the spacetime twist potential (\ref{eq:1}) and
(\ref{eq:2}). The derivative $\Y'$ is given by
\begin{equation}
  \label{eq:20}
 \Y'=n^a\nc_a \Y= n^a\epsilon_{abcd }\eta^b \nc^c\eta^d. 
\end{equation}
Note that $ n^a\epsilon_{abcd }$ is the volume element on the
3-dimensional slice $\slicet$ denoted by $\epsilon_{abc}$ in section
\ref{sec:main-result}. The covariant derivative $\nc_a$ in
(\ref{eq:20}) appears antisymetrized and hence we can replace it by
any covariant derivative, in particular $\Dt_a$. Then, we obtain that
(\ref{eq:20}) is equivalent to (\ref{eq:55}).

The twist potential $\Y$ can be calculated in terms of the initial
data as follows. Define the vector $S^a$ by
\begin{equation}
  \label{eq:2t}
    S_a=  K_{ab}\eta^b. 
\end{equation} 
Using the definition of $K_{ab}$
\begin{equation}
  \label{eq:24}
  K_{ab}=-\qt_a^c\nc_c n_b,
\end{equation}
we obtain
\begin{equation}
  \label{eq:50}
  S_a=-n^c\nc_c \eta_a,
\end{equation}
where we have used the Killing equation $\nc_{(a}\eta_{b)}=0$,
equation  $\eta_a n^a=0$ and the following expression for the metric $\qt_{ab}$
\begin{equation}
  \label{eq:84}
  \gc_{ab}=\qt_{ab}- n_a n_b.
\end{equation}

We use the following expression for the covariant derivative of
$\eta_a$ (see \cite{Geroch71})
\begin{equation}
  \label{eq:58}
  \nc_a \eta_b=\frac{1}{2\lambda^2}
  \epsilon_{abcd}\eta^c\Y^d+\frac{1}{2\lambda} \eta_{[a}\Dt_{b]}\lambda. 
\end{equation}
Using (\ref{eq:58}) and (\ref{eq:50}) we obtain
\begin{equation}
  \label{eq:73}
 \epsilon_{abc}  S^b \eta^c=-\frac{1}{2} \Dt_a\Y.
\end{equation}
Note that the left hand side of equation (\ref{eq:73})  is calculated only with
the initial data. This is the desired equation.

The relation between the trace of the extrinsic curvature $\et$ of the slice
$\slicet$ and the trace of extrinsic curvature 
 $\ed$ of the slice $\sliced$ is given by
\begin{equation}
  \label{eq:37}
  K=\ed-\frac{\norm'}{\norm}.
\end{equation} 
If we  prescribe $K=0$ then we have
\begin{equation}
  \label{eq:40}
  \ed=\frac{\norm'}{\norm}. 
\end{equation}
The relation between the two extrinsic curvatures is given by
\begin{equation}
  \label{eq:53}
  \ed_{ab}= q^c_aq^d_b\et_{cd}.
\end{equation}
The following component of the extrinsic curvature will be used in the
next section
\begin{equation}
  \label{eq:54}
\et_{\phi\phi}=\et_{ab}\eta^a\eta^b=  - \norm' \norm. 
\end{equation}
Using equations (\ref{eq:2t}), (\ref{eq:73}), (\ref{eq:53}), and
(\ref{eq:54}) we obtain the following expression for the square of $\et_{ab}$
\begin{equation}
  \label{eq:62}
  \et^{ab}\et_{ab}= \ed^{ab}\ed_{ab}+\frac{\norm'^2}{\norm^2}+
  \frac{|D\Y|^2}{2\norm^4}.  
\end{equation}

\section{Maximal-isothermal gauge and mass integral }
\label{sec:maxim-isoth-gauge}
In this section we review the well known maximal-isothermal gauge (see
\cite{Nakamura87}, \cite{Garfinkle:2000hd} \cite{Choptuik:2003as},
\cite{Rinne:2005df}, \cite{Rinne:2008tk}) using the (2+1)+1 formalism
presented in the previous section. We analyze the fall off behavior of
the fields in this gauge near the axis an at infinity. Finally, we
present a new derivation of the mass integral formula.
\subsection{Gauge}

The maximal gauge condition is well known. In the standard $3+1$
decomposition, the equation for the lapse is given by  
\begin{equation}
  \label{eq:21}
  \Lg \ld = \ld \et^{ab} \et_{ab},
\end{equation}
where $\Lg$ denotes the Laplacian with respect to the metric $\qt_{ab}$. 
 
The shift vector is fixed by the requirement that the isothermal
coordinates are preserved under the evolution.
In isothermal coordinates, the 2-dimensional metric $q_{ab}$ has the
form \eqref{eq:60}. If we take a time
derivative to \eqref{eq:60} we obtain
\begin{equation}
  \label{eq:39}
  \dot q_{ab}= 2\dot u \, q_{ab},
\end{equation}
that is, the tensor $\dot q_{ab}$ is pure trace. Hence, the trace
free part (with respect to $q_{ab}$)  of the
evolution equation \eqref{eq:adm1} is given by
\begin{equation}
  \label{eq:42}
 (\ckq \sd )_{ab}= 2\ld k_{ab},
\end{equation}
where $\ckq$ is the conformal Killing operator with respect to $q_{ab}$ 
\begin{equation}
  \label{eq:43}
 (\ckq \sd )_{ab} =D_a\sd_b + D_b\sd_a -q_{ab}  D_c\sd^c. 
\end{equation}
Equations \eqref{eq:42} constitute a first order elliptic  system 
which determine $\sd^a$ under appropriate boundary conditions, as we
will see in  section \ref{sec:existence-gauge-mass}. Summarizing, the
lapse and shift for the maximal isothermal gauge are determined by
equations (\ref{eq:21}) and (\ref{eq:42}).

\subsection{Fall off behavior and axial regularity in isothermal coordinates}
\label{sec:fall-behavior-axial}

The fall of conditions \eqref{eq:66} for the 3-dimensional metric
implies that (see
\cite{Chrusciel:2007dd})
\begin{equation}
  \label{eq:44}
\u= \x=\oi{-1/2}.
\end{equation}
From equations (\ref{eq:53}) and equation (\ref{eq:76}) we obtain
\begin{equation}
  \label{eq:78}
  \ed_{ij}=\oi{-3/2}.
\end{equation}
The function $\norm'/\norm$ will be important in the next
section, we need to compute its fall off behavior. This behavior
is not an obvious consequence of (\ref{eq:76}) because the relation
between these two functions involves a quotient by $\rho^2$. To
compute the fall off of this function  we
proceed as follows. 
In  Cartesian components we have
\begin{equation}
  \label{eq:79}
  \eta^x=-y, \quad \eta^y=x,
\end{equation}
and hence
\begin{equation}
  \label{eq:67}
 \et_{\phi\phi}= y^2\et_{xx}-2xy\et_{xy}+x^2\et_{yy}.
\end{equation}
Following \cite{Chrusciel:2007dd} it is enough to consider the plane
$x=0$ which is transversal to the Killing vector.  If we use
(\ref{eq:54}), (\ref{eq:67}) and (\ref{eq:61}) and then evaluate at
$x=0$ we have
\begin{equation}
  \label{eq:80}
  -\frac{\norm'}{\norm}= \frac{ \et_{\phi\phi}}{\norm^2}=e^{-\x}\et_{xx}.
\end{equation}
Hence, at this plane the function $\norm'/\norm$ is smooth and it has
the same fall off as $K_{ab}$, namely
\begin{equation}
  \label{eq:23}
 \frac{ \norm'}{\norm}= \oi{-3/2}.
\end{equation}
Since the plane $x=0$  is transversal to the axial Killing vector it follows
that (\ref{eq:23}) holds everywhere.

The regularity of the metric $\qt_{ab}$ and the extrinsic curvature
$\et_{ab}$ at the axis implies restrictions on the behavior of the
different quantities (for a detailed discussion of this issue see
\cite{Rinne:2005df} and \cite{Rinne:2005sk}). We summarize here two
important consequences of axial regularity. First, the function $\q$
defined by
\begin{equation}
  \label{eq:31}
  \q=\u -\frac{\x}{2},
\end{equation}
vanished at the axis (see
\cite{Chrusciel:2007dd}) 
\begin{equation}
  \label{eq:q}
  \q(\rho=0)=0.
\end{equation}
Second, the functions involved satisfy parity conditions with
respect to the $\rho$ coordinate. In particular we have
\begin{equation}
  \label{eq:10}
  \u, \q, \x, k_{\rho}^\rho \text{ are even functions of }\rho,
\end{equation}
and
\begin{equation}
  \label{eq:9}
  \norm, k^\rho_z \text{ are odd functions of }\rho.
\end{equation}
Note that odd functions vanishes  at the axis.

\subsection{Mass formula}
Consider the Hamiltonian constraint  \eqref{eq:8}. Using equations
\eqref{eq:61} and (\ref{eq:31}) we write the Hamiltonian constraint in
terms of $\rho$, $\x$ and $\q$ in such a way that there are no
singular terms at the axis.  To do this, we use the following
elementary identities. 

The Ricci scalar of $q_{ab}$ is given by
\begin{equation}
  \label{eq:ricciscalar}
\rd=-2e^{-2u}\Ld u,
\end{equation}
where $\Ld$ is the flat 2-dimensional Laplacian
\begin{equation}
  \label{eq:70}
 \Ld u=  u_{,\rho\rho}+ u_{,zz}.
\end{equation}
Using equation (\ref{eq:61}) we obtain 
\begin{equation}
  \label{eq:16}
\frac{|D\norm|^2}{\norm^2}= |D\log \norm|^2=|D(\frac{\x}{2} +\log \rho)|^2.
\end{equation}
Finally, the $\rho$ coordinate satisfy 
\begin{equation}
  \label{eq:19}
\Ldt \log \rho  =0.
\end{equation}
where we have defined  $\Ldt$ as  the 3-dimensional flat Laplacian 
\begin{equation}
  \label{eq:20c}
\Ldt  u = \Ld u + \frac{u_{,\rho}}{\rho}.
\end{equation}

Using equations \eqref{eq:31}, (\ref{eq:ricciscalar}), (\ref{eq:16}) and
(\ref{eq:19}) the Hamiltonian constraint \eqref{eq:8} can be written in the
following form
\begin{equation}
  \label{eq:46}
 -2\Ldt \x -2\Ld \q =
\frac{e^{2u}}{2}\left[
  2\ed^{ab}\ed_{ab}+2\ed(-\ed+2\frac{\norm'}{\norm}) + |D\x |^2 
+ \frac{1}{\norm^4}\left( \Y'^2 + |D\Y|^2 \right)\right].  
\end{equation}
We define the mass as the integral of the right  hand side of this
equation, namely
\begin{equation}
  \label{eq:114}
m=\frac{1}{16}\int_{S}\left(2k^{ab}k_{ab}-\ed^2+4\chi
  \frac{\norm'}{\norm}  + |D\x |^2  
+ \frac{1}{\norm^4}\left( \Y'^2 + |D\Y|^2 \right)   \right )\rho \dvq,   
\end{equation}
where we have used (\ref{eq:59}) and $\dvq=e^{2u}d\rho dz$ denote
the volume element with respect to $q_{ab}$. Note that we introduce a
weight $\rho$ in the integral. If we impose the  condition $K=0$
(using equation (\ref{eq:37})) we get that the integrand is positive
definite, namely
\begin{equation}
  \label{eq:15b}
m= \frac{1}{16}\int_{S}\left(2k^{ab}k_{ab}+3\frac{\norm'^2}{\norm^2}  + |D\x |^2
+ \frac{1}{\norm^4}\left( \Y'^2 + |D\Y|^2 \right)   \right )\rho \dvq. 
\end{equation}
For maximal 2-dimensional slices $\ed=0$ the integral
(\ref{eq:114}) is also positive definite.  However, this choice will
not give an equation for the lapse function (analog to (\ref{eq:21}))
with positive definite solutions.  
To prove the equivalence of equation (\ref{eq:15b}) with (\ref{eq:15})
we use (\ref{eq:62}).

The crucial point is that this volume integral can be written as a
boundary integral at infinity using the left hand side of equation
(\ref{eq:46}). We have
\begin{equation}
  \label{eq:17}
  m=-\frac{1}{4}\int_{-\infty}^{\infty}dz \int_0^\infty \left( \Ldt \x +\Ld
    \q\right)  \rho\, d\rho.  
\end{equation}
This integral can be converted into a boundary integral at infinity as
follows. For the first term we use that $\rho\, d\rho dz$ is the
volume element in $\Rt$ and then we can use the divergence theorem in
three dimension to obtain 
\begin{equation}
  \label{eq:32}
\int_{-\infty}^{\infty}dz \int_0^\infty   \Ldt \x  \rho d\rho dz=
\lim_{r\to\infty}\int_0^\pi \partial_r \, \x r \rho d \theta. 
\end{equation}
The second  term in (\ref{eq:17}) can be also written in divergence
form 
\begin{equation}
  \label{eq:86}
 \int d \rho 
 \, d z \, (q_{,\rho \rho }+q_{,z z})  \rho =  \int d \rho \, d z \, \left(
    (\rho q_{,\rho} - q)_{,\rho }+ (\rho q_{,z})_{,z} \right).
\end{equation}
We use the divergence theorem in two dimensions to transform this
volume integral in a boundary integral. Namely, let $\Omega$ be an
arbitrary domain, we have 
\begin{equation}
  \label{eq:87}
\int_{\Omega} d \rho \, d z \, \left(
    (\rho q_{,\rho} - q)_{,\rho }+ (\rho q_{,z})_{,z} \right) =
  \oint_{\partial \Omega} 
  \bar V\cdot \bar n \, d \bar s,  
\end{equation}
where $\bar n$ is the 2-dimensional unit normal, $ d \bar s$ the line
element of the 1-dimensional boundary $\partial \Omega$ and $\bar V$
is the 2-dimensional vector given in coordinates $(\rho, z)$ by
\begin{equation}
  \label{eq:88b}
\bar V= ((\rho q_{,\rho} - q), (\rho q_{,z}) ).
\end{equation}
Let $\Omega$ be the half plane $\rho\geq 0$. 
By \eqref{eq:q} and the assumption that $q$ is smooth  we have
that the vector $\bar V$ vanishes at the axis.  Then  the only contribution
of the boundary integral is at infinity, namely
\begin{equation}
  \label{eq:97}
 \int_{\rho\geq 0} \Delta_2 q \rho \,d\rho dz  =\lim_{r\to \infty}\int_0^\pi 
 \left(r \partial_r
 q-q\right)\rho \, d\theta .
\end{equation} 
Summing (\ref{eq:97}) and (\ref{eq:32}) we obtain 
 the final expression for the mass as boundary integral
\begin{equation}
  \label{eq:massb}
  m=-\frac{1}{4} \lim_{r\to \infty} \int_0^\pi  \left(\partial_r \x
    + \partial_r q-\frac{q}{r}\right) \, r\rho \, d\theta.       
\end{equation}
We can write this expression in terms of $u$ instead of $q$
\begin{equation}
  \label{eq:massbu}
  m=-\frac{1}{4} \lim_{r\to \infty}\int_0^\pi \left(\frac{1}{2}\partial_r \x +
    \frac{\x}{2r} + \partial_r 
    u-\frac{u}{r}\right)   \, r\rho \, d\theta.   
\end{equation}
Up to now we have not mention the relation of $m$ with the total mass.
In fact in order to prove theorem \ref{teo:main-result-1} we do not
need any other information. The quantity $m$ is a positive definite
integral that can be expressed as a boundary integral at infinity. As
a consequence we will prove that it is conserved along the evolution
in the maximal-isothermal gauge. 

However, it is possible to show that $m$ is precisely the total mass
of the spacetime. There is an interesting and subtle point here. The
equivalence of this formula with the total mass have been previously
proved using a strong fall off condition for $q$ (see \cite{Brill59},
\cite{Gibbons06}, \cite{Dain06c}), namely
\begin{equation}
  \label{eq:48}
  q=\oi{-3/2},
\end{equation}
which implies that $q$ does not appears in the boundary integral
(\ref{eq:massbu}). Recently, this equivalence was proved without this
restriction \cite{Chrusciel:2007dd}. In the deduction we have made, we
have not use any restriction for $q$ which is consistent with
\cite{Chrusciel:2007dd}. Note the simplicity of this deduction (which
is much in the spirit of the energy defined in \cite{Choquet-Bruhat01}
for cosmologies) compared with the ones using three dimensional slices like
\cite{Gibbons06}. 

We also mention that in the next section we will prove that the strong
decay condition (\ref{eq:48}) is preserved by the evolution.

\section{Existence of the gauge and mass conservation}
\label{sec:existence-gauge-mass}
We begin with the lapse equation \eqref{eq:21}.  This condition does
not require any symmetry on the data. The existence and uniqueness of
the solution of this equation in the standard $3+1$
decomposition are well known. For completeness we review this result.

\begin{lemma}\label{t:lapse}
  There exists a unique smooth strictly positive solution $\ld$ of  equation
  \eqref{eq:21} with the
  following fall off
  \begin{equation}
    \label{eq:65}
    \ld -1= \oi{-1/2}.
  \end{equation}
\end{lemma}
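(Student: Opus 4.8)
The plan is to linearize \eqref{eq:21} around the flat lapse and solve the resulting linear elliptic equation in weighted Sobolev spaces, exploiting the sign of the potential to exclude the kernel. First I would set $\ld = 1 + v$ and abbreviate $V := \et^{ab}\et_{ab} \ge 0$. Since $\et_{\mu\nu} = \oi{-3/2}$ and the indices are raised with a metric that is asymptotically euclidean, $V$ is a smooth nonnegative function with $V = \oi{-3}$. Equation \eqref{eq:21} then becomes the linear problem
\begin{equation}
  \Lg v - V v = V,
\end{equation}
so the lemma reduces to producing a unique solution $v = \oi{-1/2}$ and verifying afterwards that $1 + v > 0$.

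Second, I would set up the framework of weighted Sobolev spaces $H^{s}_{\delta}$ on $(\Rt,\qt_{ab})$ (in the standard conventions for asymptotically euclidean manifolds). The flat Laplacian is an isomorphism $H^{2}_{\delta} \to L^{2}_{\delta-2}$ for every nonexceptional weight $\delta \in (-1,0)$, and the target weight $\delta = -1/2$ lies in this range. Because $\qt_{\mu\nu} = \delta_{\mu\nu} + \oi{-1/2}$, the operator $\Lg$ differs from the flat Laplacian by a second-order operator whose coefficients tend to zero at infinity, hence a relatively compact perturbation, so $\Lg : H^{2}_{-1/2} \to L^{2}_{-5/2}$ is Fredholm of index zero; the maximum principle kills its kernel for negative weights, making it an isomorphism. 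Adding the zeroth-order term $-Vv$ is again a compact perturbation (multiplication by the decaying function $V$), so $\Lg - V : H^{2}_{-1/2} \to L^{2}_{-5/2}$ remains Fredholm of index zero.

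Third, I would show that $\Lg - V$ has trivial kernel and is therefore an isomorphism. If $v \in H^{2}_{-1/2}$ solves $\Lg v - Vv = 0$, then $v \to 0$ at infinity, so a positive value of $v$ would be attained as a positive interior maximum at some point $p$; there $\Lg v(p) \le 0$ while $\Lg v(p) = V(p)v(p) \ge 0$, so $\Lg v(p)=0$, and the strong maximum principle (applied to $(\Lg - V)v = 0$, whose zeroth-order coefficient $-V$ is $\le 0$, on the connected manifold $\Rt$) forces $v$ to be a positive constant, contradicting the decay. Applying the same argument to $-v$ excludes negative values, so $v \equiv 0$. Since $V$ lies in the target space (indeed $\int V^2 r^2\,dx < \infty$ because $V = \oi{-3}$), there is a unique $v \in H^{2}_{-1/2}$. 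Elliptic regularity together with the smoothness of $\qt_{ab}$ and $\et_{ab}$ makes $v$ smooth, and weighted elliptic estimates, bootstrapping on the decay of the coefficients and of the source, give the pointwise fall-off $v = \oi{-1/2}$.

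Finally, for strict positivity I would invoke the maximum principle once more, now directly for $\ld = 1+v$, which solves $\Lg \ld = V\ld$ with $V \ge 0$ and $\ld \to 1$ at infinity: a nonpositive interior minimum would give $\Lg \ld \ge 0$ and $\Lg \ld = V\ld \le 0$ simultaneously, and the strong maximum principle would then force $\ld$ to be a nonpositive constant, contradicting $\ld \to 1$; hence $\ld > 0$ everywhere. \emph{The main obstacle} is the weighted-space bookkeeping in the second step: confirming that the isomorphism property of the flat Laplacian survives both the asymptotically flat perturbation of the metric and the addition of the potential, and keeping careful track of the exceptional weights. By contrast, the sign condition $V \ge 0$ makes the otherwise delicate kernel-triviality entirely clean.
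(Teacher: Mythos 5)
Your proposal is correct and follows essentially the same route as the paper: linearize via $\ld = 1+\bar\ld$, solve the linear equation $\Lg\bar\ld - f\bar\ld = f$ (with $f=\et^{ab}\et_{ab}\geq 0$) in weighted Sobolev spaces at the weight $\delta=-1/2$, and conclude positivity from the maximum principle on the noncompact manifold. The only difference is presentational: where the paper cites the isomorphism theorem for $\Lg - f : H^{s+2}_{-1/2}\to H^{s}_{-5/2}$ (Choquet-Bruhat--Isenberg--York, Maxwell), you sketch its standard proof (flat-Laplacian isomorphism plus relatively compact perturbation plus kernel triviality), which is a faithful unpacking rather than a different argument.
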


\begin{proof} 
Set
\begin{equation}
  \label{eq:4}
 \bar\ld =\ld-1,
\end{equation}
then, using \eqref{eq:21}, we have that 
\begin{equation}
  \label{eq:57}
 \Lg \bar\ld   -f \bar \ld=f,
\end{equation}
where $f=K^{ab}K_{ab}\geq 0$. The fall off behavior assumed for
$K_{ab}$ in \eqref{eq:76} implies that $f\in \hkb{s}{-3/2}$ for all
$s$.  Here $\hkb{s}{\delta}$ denote weighted Sobolev spaces.  See
\cite{Bartnik86} \cite{Cantor81} \cite{Choquet81} and reference
therein for definition and properties of these spaces. We use the
index notation of \cite{Bartnik86}.

The decay assumed on the metric $\qt_{ab}$ \eqref{eq:66} implies that
the operator $\Lg-f:H^{s+2}_\delta\to H^s_{\delta-2} $ is an
isomorphism for $\delta=-1/2$ (see for example \cite{Choquet99},
\cite{Maxwell04}). Note that we use this theorem for dimension 3.
Then there exists a unique solution $\bar\ld\in \hkb{s}{-1/2} $ of
equation \eqref{eq:57}.  

Since $\bar \alpha \to 0$ as $r\to \infty$ (because $\bar \alpha \in
H^{s}_{-1/2}$), it follows that $\alpha \to 1$ as $r\to \infty$. Hence, we can
use the version of the maximum principle for non-compact manifolds presented in
\cite{Choquet99} to conclude that there exists $\epsilon>0$ such that $\alpha
\geq \epsilon $.

\end{proof}

If the data is axially symmetric then the solution $\ld$ will be also
axially symmetric: take the Lie derivative with respect to $\eta^a$ to
both sides of equation \eqref{eq:57}, we get
\begin{equation}
  \label{eq:68}
  \Delta_\qt\left( \pounds_\eta\bar \ld\right)-f\left(
    \pounds_\eta\bar \ld\right) =0,
\end{equation}
since $ \pounds_\eta\bar \ld\in\hkb{s}{-1/2} $ (to see this, take the
expression of $\eta$ in Cartesian coordinates (\ref{eq:79})), the
isomorphism theorem mentioned above implies that $\pounds_\eta\bar
\ld=0$. Finally, we note that since $\ld$ is smooth in $(x,y,z)$, it
follows that an axially symmetric $\ld$ is an even function of $\rho$.

We consider now the equation for the shift.  Instead of solving the
first order system \eqref{eq:42} we will solve a second order system
obtained from taking a derivative to \eqref{eq:42}. Then, we will show
that a solution of the second order system is also a solution of the
original equations \eqref{eq:42} under appropriate boundary
conditions.

Under the conformal transformations \eqref{eq:60}, the conformal
Killing operator rescale like
\begin{equation}
  \label{eq:64}
 \ckq(\sd)_{ab}=e^{2u}\ck(e^{-2u} \sd)_{ab},
\end{equation}
where $\ck$ is the flat conformal Killing operator
\begin{equation}
  \label{eq:69}
(\ck \sd )_{ab} =\partial_a\sd_b + \partial_b\sd_a -q_{ab}  \partial_c\sd^c.  
\end{equation}
We take  a flat divergence to equation
~\eqref{eq:42} and use equation \eqref{eq:64} to get 
\begin{equation}
  \label{eq:shift-inter}
  \Ld \left( e^{-2u} \sd_a\right) =2\partial^b (\ld e^{-2u} k_{ab}), 
\end{equation}
where we have used
\begin{equation}
  \label{eq:71}
\partial^b (\ck \sd )_{ab} =\Ld \sd_a. 
\end{equation}
If we contract equation \eqref{eq:shift-inter} with  $\delta^{ab}$ and use
$\qd^{ab}=e^{-2u}\delta^{ab}$ we obtain our final equation
\begin{equation}
  \label{eq:shift}
  \Ld \sd^a =2\partial^b (\ld k_{b}^a),  
\end{equation}
where $\sd^a=\qd^{ab}\sd_b$ and $ k_{b}^a=\qd^{ac}k_{cb}$

As it was mentioned in section \ref{sec:fall-behavior-axial}, the
smoothness of the metric at the axis implies that the relevant
functions satisfies parity conditions in the $\rho$ coordinate.  In
particular, the components of the shift vector $\sd^a$ should satisfy
\begin{equation}
  \label{eq:88}
  \sd^z(\rho,z)=\sd^z(-\rho,z)  , \quad \sd^\rho(\rho,z)=
  -\sd^\rho(-\rho,z). 
\end{equation}

\begin{theorem}\label{t:shift}
  There exists a unique, smooth,  solution $\sd^a$   of equation
  \eqref{eq:shift} with the following fall off behavior
  \begin{equation}
    \label{eq:11}
    \sd^a=\oi{-1/2}.
\end{equation}
The solution  satisfies the parity conditions \eqref{eq:88} and the
gauge equation \eqref{eq:42}.  

Moreover, the quotient $\sd^\rho /\rho$ is smooth and have the
following fall off
\begin{equation}
  \label{eq:92}
   \frac{ \sd^\rho}{\rho}=\oi{-3/2}.
\end{equation}

\end{theorem}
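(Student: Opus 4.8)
The plan is to carry out the two-step strategy announced just before the statement: first solve the second-order equation \eqref{eq:shift} in a weighted space, and then show that its solution necessarily satisfies the first-order system \eqref{eq:42}. For existence I would read \eqref{eq:shift} as a pair of scalar Poisson equations, one for $\sd^z$ and one for $\sd^\rho$, on the half-plane $\{\rho\ge 0\}$, and use the parity conditions \eqref{eq:10}--\eqref{eq:9} to reflect each component across the axis into a single equation on $\Rd$: $\sd^z$ extends evenly and $\sd^\rho$ oddly, and $\Ld$ commutes with both reflections. The right-hand side of \eqref{eq:shift} is the flat divergence of $2\ld\, k_b^a$, which is $\oi{-3/2}$ by \eqref{eq:78} and Lemma~\ref{t:lapse}, so it lies in $\hkb{s}{-5/2}$ for every $s$. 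The point special to two dimensions is that $\Ld:\hkb{s+2}{-1/2}\to\hkb{s}{-5/2}$ fails to be surjective, because $\delta=0$ is an exceptional (indicial) weight of the flat Laplacian in $\Rd$, and the single obstruction is the total integral of the source over $\Rd$. Here the divergence structure of the right-hand side of \eqref{eq:shift} is decisive: being $\partial^b$ of a tensor that decays faster than $r^{-1}$, its integral over $\Rd$ equals a flux through the circle at infinity that tends to zero, so the solvability condition is met. This yields a unique $\sd^a\in\hkb{s}{-1/2}$ for all $s$, hence a smooth solution with the fall-off \eqref{eq:11}; uniqueness together with the definite parity of the two sources then forces \eqref{eq:88}.

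The heart of the matter, and the step I expect to be the main obstacle, is to upgrade this solution of the divergence of \eqref{eq:42} to a solution of \eqref{eq:42} itself. I would set $\tau_{ab}=\ck(e^{-2u}\sd)_{ab}-2\ld e^{-2u}k_{ab}$. This tensor is symmetric and, with respect to the flat metric, trace-free, since both $\ck(\,\cdot\,)$ and $e^{-2u}k_{ab}$ are (recall $k_{ab}$ is trace-free for $q_{ab}$). By \eqref{eq:71} its flat divergence equals the left minus the right side of \eqref{eq:shift-inter}, which vanishes because our solution of \eqref{eq:shift} solves \eqref{eq:shift-inter}; thus $\tau_{ab}$ is a symmetric, trace-free, divergence-free tensor on the flat plane, and by the parity/Cartesian smoothness it extends smoothly to all of $\Rd$. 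For such a tensor the divergence-free condition is exactly the Cauchy--Riemann system for the complex function $f=\tau_{\rho\rho}-i\tau_{\rho z}$ in the variable $\rho+iz$, so $f$ is entire. The decay of $\sd$, $u$, $\ld$ and $k_{ab}$ gives $\tau_{ab}=\oi{-3/2}$, hence $f$ tends to zero at infinity and so vanishes identically by Liouville's theorem. Therefore $\tau_{ab}=0$, and by the conformal rescaling \eqref{eq:64} this is precisely the gauge equation \eqref{eq:42}.

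Finally, for the statement on $\sd^\rho/\rho$, smoothness is the Cartesian regularity condition: a smooth odd function of $\rho$ divided by $\rho$ is a smooth even function, which is equivalent to $\sd$ being a smooth vector field in $(x,y,z)$ with $\sd^x=(x/\rho)\sd^\rho$ and $\sd^y=(y/\rho)\sd^\rho$. The improved fall-off \eqref{eq:92} then follows from \eqref{eq:11} by a Hadamard-type argument: writing $\sd^\rho/\rho=\int_0^1(\partial_\rho\sd^\rho)(s\rho,z)\,ds$ and using that one derivative of an $\oi{-1/2}$ function is $\oi{-3/2}$ shows the quotient is $\oi{-3/2}$ in the region near the axis, while away from the axis $|\sd^\rho|/\rho\lesssim\oi{-1/2}/r=\oi{-3/2}$ directly, and the same splitting controls the derivatives.
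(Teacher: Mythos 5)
Your proposal is correct, and its overall skeleton coincides with the paper's: split \eqref{eq:shift} into two scalar Poisson equations, extend them to all of $\Rd$ by parity, check the two-dimensional solvability condition $\int_{\Rd} f=0$ (for $F^z$ via the divergence structure and vanishing flux, for $F^\rho$ it is automatic), and finally recover the first-order gauge system \eqref{eq:42} by observing that $t_{ab}$ is a trace-free, divergence-free symmetric tensor, hence a Cauchy--Riemann pair, killed by Liouville's theorem -- that last step is literally the paper's argument. Where you genuinely diverge is in the two technical ingredients. For existence, decay and uniqueness you invoke the weighted-Sobolev (McOwen-type) theory of the flat Laplacian on $\Rd$, with the one-dimensional cokernel spanned by the constants, and you then deduce the parity \eqref{eq:88} from uniqueness plus reflection symmetry; the paper instead constructs the solution explicitly by convolution with the logarithmic Green function and its multipole expansion \eqref{eq:96}, reading off parity from the fact that only $\sin(n\theta)$ (resp.\ $\cos(n\theta)$) modes appear. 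More substantially, the improved fall-off \eqref{eq:92} -- which is the very reason the paper uses the Green function, as it states right after the proof -- is obtained there from the expansion \eqref{eq:99} together with the smoothness of $\Theta_n(\theta)=\sin(n\theta)/\sin\theta$ (identity \eqref{eq:105}), each mode gaining a factor $r^{-1}$ upon division by $\rho=r\sin\theta$. You replace this with an elementary Hadamard-type argument: near the axis $\sd^\rho/\rho=\int_0^1(\partial_\rho\sd^\rho)(s\rho,z)\,ds$ with $\partial_\rho\sd^\rho=\oi{-3/2}$, and the intermediate points $(s\rho,z)$ remain at distance comparable to $r$ from the origin there, while away from the axis $\rho\gtrsim r$ and one divides \eqref{eq:11} directly; the same splitting controls all derivatives. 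This is a correct and genuinely different proof of the key estimate, precisely the kind of alternative the paper anticipates when it remarks that ``this kind of expansions can be obtained by other methods''. The paper's expansion buys completely explicit mode-by-mode control; your argument buys independence from any explicit kernel, so it would carry over to settings where no closed-form Green function is available.
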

Remark: the fall-off condition (\ref{eq:92}) will be essential in the
proof of the mass conservation theorem \ref{t:mass}. 
 
\begin{proof}
  We distinguish the two components of equation \eqref{eq:shift}
  \begin{equation}
    \label{eq:27}
     \Ld \sd^\rho =F^\rho ,\quad  \Ld \sd^z =F^z,
  \end{equation}
where
\begin{equation}
  \label{eq:29}
  F^\rho =2\partial^j (\ld k_{j}^\rho) , \quad   F^z =2\partial^j (\ld
  k_{j}^z). 
\end{equation}
The function $\ld$ is even in $\rho$, using (\ref{eq:10}) and
(\ref{eq:9}) we obtain that  $ F^\rho$ is odd and $F^z$ is even in
$\rho$. 

In terms of the cylindrical coordinates $(\rho,z)$, the source functions
$F$ are defined only for $\rho\geq 0$. But because they satisfy   parity
conditions we can smoothly extend them to all  
$\Rd$  by the following prescription
\begin{equation}
  \label{eq:30}
 F^\rho (\rho,z)=  -F^\rho(-\rho,z), \quad  F^z(\rho,z)=  F^z(-\rho,z).
\end{equation}
Hence, we can consider equations \eqref{eq:27} as Poisson equations
\begin{equation}
  \label{eq:95}
  \Ld u=f,
\end{equation}
in $\Rd$ with decay conditions at infinity  (\ref{eq:11}).
Uniqueness of solution of (\ref{eq:95}) under the decay condition
(\ref{eq:11}) follows immediately integrating by parts the Laplace
equation. Namely, for any domain $\Omega$ we have
\begin{equation}
  \label{eq:93}
\int_{\Omega} u \Ld u=\int_{\Omega} \partial (u\partial u)-|\partial
u|^2=\oint_{\partial\Omega} u\partial_n u -\int_{\Omega}|\partial u|^2, 
\end{equation}
and hence for a solution of $\Ld u=0$ we have
\begin{equation}
  \label{eq:94}
 \oint_{\partial\Omega} u\partial_n u =\int_\Omega |\partial u|^2. 
\end{equation}
Let $\Omega$ be a ball of radius $R$, and take the limit $R\to
\infty$. By the decay condition (\ref{eq:11}),  the boundary integral
in (\ref{eq:94}) vanishes. Then, it follows that $u$ must be constant, by
the decay condition it must be zero. Since the homogeneous equation
has only the trivial solution uniqueness follows. 
Also, for solutions in this
class, integrating  equation (\ref{eq:95}) we obtain that the source
must satisfy the condition 
\begin{equation}
  \label{eq:74}
  \int_{\Rd} f =0.
\end{equation}
 We need to verify \eqref{eq:74} for our source functions.  The
 function  $F^\rho$ is odd in $\rho$, hence it satisfies \eqref{eq:74}
automatically.  For $F^z$
we use that it can be written in divergence form
\begin{equation}
  \label{eq:33}
  F^z=\partial_i v^i, \quad v^i=2 \ld k^{i}_z .
\end{equation}
Then we have
\begin{equation}
  \label{eq:34}
   \int_{\Rd}  F^z  =2\int_{\rho\geq 0}  F^z =2\int_{\rho\geq 0} \partial_i
   v^i
\end{equation}
using the divergence theorem in 2-dimension we get
\begin{equation}
  \label{eq:72}
 \int_{\rho\geq 0} \partial_i
   v^i= \int_{0}^\infty v^\rho|_{\rho=0}\, d\rho + \lim_{R\to
     \infty}\oint_{C_R}v^r ds, 
\end{equation}
where $C_R$ denotes the semicircle with $r=R$. 
By the decay condition (\ref{eq:78}) for $k_{ab}$ and (\ref{eq:65})
for the lapse, we have that the second boundary
integral vanishes. For the first one, we  equation \eqref{eq:9} to
conclude that
\begin{equation}
  \label{eq:35}
   v^\rho|_{\rho=0}=0.
\end{equation}
And hence we obtain the desired result that $F^z$ satisfies
(\ref{eq:74}). 

To actually obtain a solution of equation (\ref{eq:95}) we use the
Green formula for the Laplacian in two dimension
$G(x,x')=-\log|x-x'|^2$.  A solution $u$ has the integral
representation
\begin{equation}
  \label{eq:81}
  u(x)=\int_{\Rd} G(x,x') f(x') dx'. 
\end{equation}
In polar coordinates $(r,\theta)$ (related to ($\rho, z$) by the
standard formula $z=r\cos\theta$ $\rho=r\sin\theta$) we have the well
known expansion of $G$
\begin{align}
  \label{eq:96}
  G(r,\theta;r',\theta') & =-2 \log r + 2\sum_{n=1}^{\infty}\frac{r'^n}{n
    r^n}\cos(n(\theta-\theta') )\text{ for }r>r', \\
& =-2 \log r' + 2\sum_{n=1}^{\infty}\frac{r^n}{n
    r'^n}\cos(n(\theta-\theta') )\text{ for }r<r'.
\end{align}
We use the trigonometric identity
\begin{equation}
  \label{eq:98}
  \cos(n(\theta-\theta') )= \cos(n\theta) \cos(n\theta')+
  \sin(n\theta) \sin(n\theta'),
\end{equation}
to split the sum in terms which are odd functions of $\theta$ (and
hence of $\rho$) and terms which are even functions of $\theta$. 

If the source function $f$ in the integral (\ref{eq:81}) is odd in
$\rho$, then the terms involving $\cos(n\theta')$ vanish, for even
source functions the terms with $\sin(n\theta')$ vanish.

We are interested in the fall off behavior of the solution. It is
convenient to split the source functions $F$ in two term, one with
compact support in some ball and the other which pick up the decay
behavior. Let $\chi:\mathbb{R}\to \mathbb{R}$ be a cut off function
such that $\chi \in C^\infty(\mathbb{R})$, $0\leq\chi\leq 1$,
$\chi(t)=1$ for $0\leq t \leq 1$, $\chi(t)=0$ for $2\leq t$ and write
$\chi_R(r)=\chi(r/R)$, $f_\infty=(1-\chi_R) f$, where $R$ is an
arbitrary positive number.

Using the expansion (\ref{eq:96}) we compute the integral
(\ref{eq:81}). Note that the term with $\log r$  vanish because
(\ref{eq:74}). 
For $r>R$ we have for $\sd^\rho$
\begin{equation}
  \label{eq:99}
  \sd^\rho=  2\sum_{n=1}^{\infty}\frac{\sin(n\theta)}{n} \left[
    r^{-n}A_n+B_n(r) + C_n(r)) \right], 
\end{equation}
where
\begin{align}
  \label{eq:102}
  A_n &=\int_0^R r'^n \sin(n\theta') F^\rho r' dr'd\theta', \\
  \label{eq:13}
  B_n(r) &=r^{-n}\int_{R}^r r'^n \sin(n\theta') F_\infty^\rho \,
  r'  dr'd\theta',\\
  C_n(r) & =r^n \int_{r}^\infty \frac{1}{r'^n} \sin(n\theta')
  F_\infty^\rho \, r'  dr'd\theta'. 
\end{align}
For $\sd^z$ we have a similar expansion for  $r>R$
\begin{equation}
  \label{eq:99z}
  \sd^z=  2\sum_{n=1}^{\infty}\frac{\cos(n\theta)}{n} \left[
    r^{-n}A_n+B_n(r) + C_n(r)) \right] ,
\end{equation}
where
\begin{align}
  \label{eq:102z}
  A_n &=\int_0^R r'^n \cos(n\theta') F^z r' dr'd\theta' , \\
  \label{eq:13z}
  B_n(r) &=r^{-n}\int_{R}^r r'^n \cos(n\theta') F_\infty^z \,
 r'   dr'd\theta',\\
  C_n(r) & =r^n \int_{r}^\infty \frac{1}{r'^n} \cos(n\theta')
  F_\infty^z \, r'  dr'd\theta'. 
\end{align}
We use the decay condition (\ref{eq:78}) to obtain that the sources
satisfies
\begin{equation}
  \label{eq:106}
  F^\rho=F^z=\oi{-5/2}.
\end{equation}
From (\ref{eq:106}) we get that
\begin{equation}
  \label{eq:108}
  B_n(r)=  C_n(r)=\oi{-1/2}.
\end{equation}
These are the solutions of our problem. It is clear that they
satisfy the parity conditions \eqref{eq:88}. 

The important property of the representation (\ref{eq:99}) is
that it allow us to prove (\ref{eq:92}). In effect, we use the
trigonometric identity \cite{Gradshteyn80}
\begin{equation}
  \label{eq:105}
  \sin(n\theta)= \sum_{k=0}^{n-1} \binom{n}{k}   (\cos\theta)^k (\sin\theta
  )^{n-k}\sin\left(\frac{1}{2}(n-k)\pi \right)
\end{equation}
to obtain that the function $\Theta_n$ defined by 
\begin{equation}
  \label{eq:12}
\Theta_n(\theta)= \frac{ \sin(n\theta)}{\sin\theta}= \sum_{k=0}^{n-1}
 \binom{n}{k}
 (\cos\theta)^k (\sin\theta)^{n-k-1} 
  \sin\left(\frac{1}{2}(n-k)\pi \right) 
\end{equation}
is an smooth function. Hence, using (\ref{eq:99}),  we obtain the expansion 
\begin{equation}
  \label{eq:104}
 \frac{\sd^\rho}{\rho}= 2\sum_{n=1}^{\infty}\frac{\Theta_n(\theta)}{n} \left[
    r^{-n+1}A_n+r^{-1}B_n(r) + r^{-1}C_n(r)) \right].  
\end{equation}
Using a similar argument as above we obtain (\ref{eq:92}).

Finally, we need to verify that our solution of equation
\eqref{eq:shift} is also a solution of the first order system
\eqref{eq:42}. Define the trace free tensor $t_{ab}$ by
\begin{equation}
  \label{eq:77}
  t_{ab}=\left( (\ckq \sd )_{ab}- 2\ld k_{ab}\right)e^{-2u}.
\end{equation}
We have proved that 
\begin{equation}
\label{eq:dtd}
\partial^a t_{ab}=0.
\end{equation} 
Let $t_{\rho\rho}=t_1$ and $t_{\rho z}=t_2$, then equations
\eqref{eq:dtd} are given by
\begin{equation}
 \label{eq:cr}
\partial_1 t_1 + \partial_2 t_2=0,\quad \partial_1 t_2 - \partial_2 t_1=0.
\end{equation}
These are the Cauchy-Riemann equation for a complex function
$f=t_1+it_2$. That is, a non trivial solution of \eqref{eq:cr} implies
that $f$ is an entire function on the complex plane. But $t_1$ and
$t_2$ decay to zero at infinity. Hence, by Liouville's theorem, $f$
vanishes. And then, we obtain that $\sd$ is also a solution of the gauge
equation (\ref{eq:42}).
\end{proof}

Solutions of equation \eqref{eq:95} can be obtained by other methods,
for example see Proposition 2.6 in \cite{Chrusciel:2007dd}. In the
proof above we have used the Green function in order to prove
\eqref{eq:92}. It is likely that also this kind of expansions can be
obtained by other methods like the ones used in \cite{chrusciel90}.

To prove the mass conservation we will make use of the evolution
equation given by the trace of (\ref{eq:adm1}), namely
\begin{equation}
  \label{eq:41}
 2 \dot u =-\ld\ed+D_a\sd^a.
\end{equation}
In terms of partial derivatives this equation is written like
\begin{equation}
  \label{eq:22}
  2 \dot u =-\ld\ed+\partial_i\sd^i +2\sd^i\partial_i u.
\end{equation}

\begin{theorem}[Mass conservation]
\label{t:mass}
We have
\begin{equation}
  \label{eq:25}
  \frac{dm(t)}{dt}=0.
\end{equation}
  
\end{theorem}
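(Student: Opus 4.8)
The plan is to start from the expression of the mass as a boundary integral at infinity, equation \eqref{eq:massbu}, and differentiate it in $t$. Since $m$ depends on the data only through the asymptotic behaviour of $\x$ and $\u$, its time derivative depends only on the asymptotic behaviour of $\dot\x$ and $\dot\u$. Interchanging $d/dt$ with $\lim_{r\to\infty}\int_0^\pi$ gives
\begin{equation*}
\frac{dm}{dt}=-\frac{1}{4}\lim_{r\to\infty}\int_0^\pi\left(\frac{1}{2}\partial_r\dot\x+\frac{\dot\x}{2r}+\partial_r\dot\u-\frac{\dot\u}{r}\right)r\rho\,d\theta,
\end{equation*}
so everything reduces to showing that $\dot\x$ and $\dot\u$ decay fast enough that this limit vanishes. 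Since $r\rho=r^2\sin\theta$, it suffices to establish $\dot\x=\oi{-3/2}$ and $\dot\u=\oi{-3/2}$: then each term of the integrand, multiplied by $r\rho$, is $\oi{-1/2}$ and the limit is zero.

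For $\dot\u$ I would use the trace evolution equation \eqref{eq:22}, namely $2\dot\u=-\ld\ed+\partial_i\sd^i+2\sd^i\partial_i\u$. Using the lapse fall-off \eqref{eq:65}, the identity $\ed=\norm'/\norm$ from \eqref{eq:40} together with \eqref{eq:23}, and the shift fall-off \eqref{eq:11}, each term is controlled: $\ld\ed=\oi{-3/2}$, $\partial_i\sd^i=\oi{-3/2}$ and $\sd^i\partial_i\u=\oi{-2}$. Since the weighted $o_\infty$ classes are stable under the sums and products involved (derivative decay included), this gives $\dot\u=\oi{-3/2}$.

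The delicate step is $\dot\x$. From \eqref{eq:61} we have $\x=2\log\norm-2\log\rho$, and since $\rho$ is a fixed coordinate, $\dot\x=2\dot\norm/\norm$. Using \eqref{eq:28} to write $\dot\norm=\ld\norm'+\sd^i\partial_i\norm$, and expanding $\partial_i\log\norm$ via \eqref{eq:61}, this becomes
\begin{equation*}
\dot\x=2\ld\frac{\norm'}{\norm}+2\frac{\sd^\rho}{\rho}+\sd^i\partial_i\x,
\end{equation*}
where the middle term comes from the $\partial_\rho\log\rho=1/\rho$ piece. The first and third terms are $\oi{-3/2}$ and $\oi{-2}$ by the estimates above, but $\sd^\rho/\rho$ is exactly the quotient not controlled by the naive bound $\sd^\rho=\oi{-1/2}$. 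This is where the extra fall-off \eqref{eq:92} from Theorem \ref{t:shift}, $\sd^\rho/\rho=\oi{-3/2}$, is indispensable, and it is the reason that estimate was singled out. With it, $\dot\x=\oi{-3/2}$, and combined with the bound on $\dot\u$ we conclude $dm/dt=0$.

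I expect the main obstacle to be not the differentiation itself but the justification of the $\sd^\rho/\rho$ decay and, relatedly, the legitimacy of exchanging the time derivative with the limit at infinity; both are underwritten by the uniform $o_\infty$ estimates proved earlier. A reasonable alternative is to differentiate instead the volume form \eqref{eq:17} and integrate by parts back to the boundary, but this produces the same asymptotic requirements, so I would keep the boundary-integral route for transparency.
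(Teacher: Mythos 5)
Your proposal is correct and follows essentially the same route as the paper's proof: differentiate the boundary-integral form \eqref{eq:massbu}, derive the identity $\dot\x=2\ld\norm'/\norm+\sd^i\partial_i\x+2\sd^\rho/\rho$ (the paper's \eqref{eq:83}, which it obtains via \eqref{eq:45}--\eqref{eq:47} rather than directly from $\x=2\log\norm-2\log\rho$ and \eqref{eq:28}, an inessential difference), and conclude $\dot\x,\dot\u=\oi{-3/2}$ using \eqref{eq:23}, \eqref{eq:22} and, crucially, the refined shift estimate \eqref{eq:92}. You also correctly identified \eqref{eq:92} as the decisive ingredient, exactly as the paper's remark after Theorem \ref{t:shift} indicates.
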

\begin{proof}
  We use equation \eqref{eq:massbu} to get 
  \begin{equation}
    \label{eq:49}
    \frac{dm(t)}{dt}=\frac{1}{16} \lim_{r\to \infty}\int_0^\pi
    \left(\frac{1}{2}\partial_r \dot \x  +
    \frac{\dot \x}{2r} + \partial_r 
   \dot u-\frac{\dot u}{r}\right)   \, r\rho \, d\theta.   
  \end{equation}
  In order to prove that this integral is zero we have to compute the
  decay of the integrand at infinity. We begin with the terms containing $\dot
  \x$. 

We use the relation
\begin{equation}
  \label{eq:45}
  \frac{ \norm'}{\norm}=\frac{\x'}{2} +\frac{\rho'}{\rho},
\end{equation}
and 
\begin{equation}
  \label{eq:47}
  \rho' =-\frac{\sd^\rho}{\ld},\quad \x'=\frac{1}{\alpha} \left(\dot
    \x-\sd^i\partial_i \x \right),
\end{equation}
to conclude that
\begin{equation}
  \label{eq:83}
  \dot \x=2\ld \frac{
    \norm'}{\norm}+\sd^i\partial_i \x +2\frac{\sd^\rho}{\rho}.
\end{equation}
Using equations (\ref{eq:23}) and (\ref{eq:92}) we obtain
\begin{equation}
  \label{eq:109}
  \dot \x=\oi{-3/2}.
\end{equation}
For $\dot \u$ we use equation (\ref{eq:22}) and the decay on $\sd$ to get
\begin{equation}
  \label{eq:75}
   \dot\u =\oi{-3/2}.
\end{equation}
Using (\ref{eq:109}) and (\ref{eq:75}) we obtain that the boundary
integral \eqref{eq:49} vanishes.

\end{proof}

We mention also another consequence of equation (\ref{eq:22}). 
From this equation we deduce that
\begin{equation}
  \label{eq:82}
  \dot \q =\oi{-3/2}.
\end{equation}
Then, if the function $\q$ has initially the stronger decay
(\ref{eq:48}),  this decay will be preserved by
the evolution.

\section{Final comments}
\label{sec:final-comments}
In this final section we discuss the implication of the results
presented here for the evolution of  axially symmetric
isolated systems.


The first question one need to face in the evolution problem is the
choice of gauge. In axial symmetry there have been studied different
kind of gauge conditions (see for example \cite{Bardeen83}
\cite{Rinne:2005sk}).  The mass conservation formula single out a
particular one. If we define appropriate Sobolev norms, then the
mass formula essentially implies that the $H^1$ norm of the metric and
the $L^2$ norm of the extrinsic curvature are bounded along the
evolution. This is very desirable property which is not present in
other gauges. The mass conservation formula strongly suggests that the
most convenient gauge for the axially symmetric evolution problem is
the maximal isothermal one.

As it was mention in the introduction, the conservation of mass is
closely related with the Hamiltonian formulation of General
Relativity. The maximal isothermal gauge is a dynamical gauge that
depends on the time coordinate. The analysis of time dependent gauge
conditions and its relation with conserved quantities in the
Hamiltonian formulation was recently studied in \cite{Szabados06}
\cite{Szabados03}.  The maximal isothermal gauge satisfies the fall
off conditions described in these references.

In axial symmetry, even when the gauge conditions are fixed there
exists many possibilities to extract from Einstein equations a set of
evolution equations. The reason for this ambiguity is that Einstein
equations in 3-dimensions has no dynamics, this essentially means that
equation (\ref{eq:ees}) can be replaced by the constraint equations
(\ref{eq:8})--(\ref{eq:8b}) (see \cite{Choquet-Bruhat01}). If we chose
to do so, then we obtain an evolution scheme in which the evolution
equations are given by (\ref{eq:hy1s})--(\ref{eq:hy2s}), and the other
equations (including gauge conditions) are elliptic constraint.  But
there exists others alternatives. Following \cite{Rinne:2005df}, we
can classify them by the number of evolutions equations used vs the
number of elliptic constraint equations. The scheme presented above
has the minimum number of evolution equations and the maximum of
constraint equations. This is called a fully constrained system. On
the opposite side we have a system in which we do not solve for the
constrains and we solve the evolution equation
(\ref{eq:adm1})--~(\ref{eq:adm2}). This is called a free evolution
scheme.  In between we have other possibilities to construct partially
constrained systems.  In all cases, the gauge conditions are solved as
elliptic constraints. That is all schemes have a mix of evolution and
elliptic equations (see the discussion in chapter 3 of
\cite{Rinne:2005df}).

To prove the mass conservation we have used two equations: the
Hamiltonian constraint (\ref{eq:8}) and the evolution equation for the
conformal factor of the 2-dimensional metric (\ref{eq:41}). In a free
evolution scheme the Hamiltonian constraint is not solved. Hence, the
mass will not be conserved for arbitrary data. It will be only
conserved for data that satisfies the constraint equations. Then, in
this case the mass integral formula will be not useful for controlling
the evolution. The same will happen with partially constrained
systems in which the Hamiltonian constrain is not solved or the
evolution equation (\ref{eq:41}) is not used. Such systems was
discussed in \cite{Rinne:2005df} and in \cite{Choptuik:2003as} (in
this reference, this kind of systems appears when the evolution of the
conformal factor is used instead of the Hamiltonian constraint).

On the other hand, for a fully constrained scheme (like the ones used
in \cite{Choptuik:2003as}, \cite{Rinne:2008tk}) and a partially
constrained scheme in which both the Hamiltonian and the evolution
equation for the metric are used (like the one studied in
\cite{Garfinkle:2000hd}) the mass will be conserved. The mass
conservation formula single out these two schemes. The natural
problem now is to study their well possness.

Finally, it is interesting to note that the integral mass formula has
direct application in numerical simulations.  In numerical simulations
of isolated systems, one often uses a position-dependent resolution
that is high in the central region and much lower close to the outer
boundary of the computational domain. Hence computing the mass as a
volume integral is more accurate than as a boundary integral.  The
resulting approximation to the mass is much better conserved during
the evolution\footnote{I thank O. Rinne for pointing this out to me}
\cite{Rinnepriv}.

\section*{Acknowledgments}
It is a pleasure to thank Piotr Chru\'sciel, Robert Geroch and Martin
Reiris for discussions. Special thanks to Oliver Rinne for discussions
and for sharing his numerical results on a comparison of the two
different ways of computing the mass.

The author is supported by CONICET (Argentina).
This work was supported in part by grant PIP 6354/05 of CONICET
(Argentina), grant 05/B270 of Secyt-UNC (Argentina) and the Partner
Group grant of the  Max Planck Institute for Gravitational Physics,
Albert-Einstein-Institute (Germany).


\begin{thebibliography}{10}

\bibitem{Arnowitt62}
R.~Arnowitt, S.~Deser, and C.~W. Misner.
\newblock The dynamics of general relativity.
\newblock In L.~Witten, editor, {\em Gravitation: An Introduction to Current
  Research}, pages 227--265. Wiley, New York, 1962, gr-qc/0405109.

\bibitem{Bardeen83}
J.~M. {Bardeen} and T.~{Piran}.
\newblock {General relativistic axisymmetric rotating systems: coordinates and
  equations.}
\newblock {\em Physics Reports}, 96:205--250, 1983.

\bibitem{Bartnik86}
R.~Bartnik.
\newblock The mass of an asymptotically flat manifold.
\newblock {\em Comm. Pure App. Math.}, 39(5):661--693, 1986.

\bibitem{Beig87}
R.~Beig and N.~O. Murchadha.
\newblock The {P}oincar\'e group as the symmetry group of canonical {G}eneral
  {R}elativity.
\newblock {\em Ann. Phys.}, 174(2):463--498, 1987.

\bibitem{bicak98}
J.~Bi\v{c}\'{a}k and A.~Pravdov\'{a}.
\newblock Symmetries of asymptotically flat electrovacuum space--times and
  radiation.
\newblock {\em Journal of Mathematical Physics}, 39(11):6011--6039, 1998.

\bibitem{Brill59}
D.~Brill.
\newblock On the positive definite mass of the {B}ondi-{W}eber-{W}heeler
  time-symmetric gravitational waves.
\newblock {\em Ann. Phys.}, 7:466--483, 1959.

\bibitem{Cantor81}
M.~Cantor.
\newblock Elliptic operators and the decomposition of tensor fields.
\newblock {\em Bull. Amer. Math. Soc.}, 5(3):235--262, 1981.

\bibitem{Choptuik:2003as}
M.~W. Choptuik, E.~W. Hirschmann, S.~L. Liebling, and F.~Pretorius.
\newblock {An Axisymmetric Gravitational Collapse Code}.
\newblock {\em Class. Quant. Grav.}, 20:1857--1878, 2003, gr-qc/0301006.

\bibitem{Choquet-Bruhat04}
Y.~Choquet-Bruhat.
\newblock Future complete {${\rm U}(1)$} symmetric {E}insteinian spacetimes,
  the unpolarized case.
\newblock In {\em The Einstein equations and the large scale behavior of
  gravitational fields}, pages 251--298. Birkh\"auser, Basel, 2004.

\bibitem{Choquet81}
Y.~Choquet-Bruhat and D.~Christodoulou.
\newblock Elliptic systems in {$H_{s,\delta}$} spaces on manifolds which are
  euclidean at infinity.
\newblock {\em Acta Math.}, 146:129--150, 1981.

\bibitem{Choquet99}
Y.~Choquet-Bruhat, J.~Isenberg, and J.~W. York, Jr.
\newblock {E}instein constraint on asymptotically euclidean manifolds.
\newblock {\em Phys. Rev. D}, 61:084034, 1999, gr-qc/9906095.

\bibitem{Choquet-Bruhat01}
Y.~Choquet-Bruhat and V.~Moncrief.
\newblock Future global in time {E}insteinian spacetimes with {$\rm U(1)$}
  isometry group.
\newblock {\em Ann. Henri Poincar\'e}, 2(6):1007--1064, 2001.

\bibitem{chrusciel90}
P.~T. Chru{\'s}ciel.
\newblock Asymptotic estimates in weighted {H}\"older spaces for a class of
  elliptic scale-covariant second order operators.
\newblock {\em Ann. Fac. Sci. Toulouse Math. (5)}, 11(1):21--37, 1990.

\bibitem{Chrusciel:2007dd}
P.~T. Chru{\'s}ciel.
\newblock {Mass and angular-momentum inequalities for axi-symmetric initial
  data sets I. Positivity of mass}.
\newblock 2007, arXiv:0710.3680 [gr-qc].

\bibitem{Dain:2007pk}
S.~Dain.
\newblock The inequality between mass and angular momentum for axially
  symmetric black holes.
\newblock {\em International Journal of Modern Physics D}, 17(3-4):519--523,
  2008, arXiv:0707.3118 [gr-qc].

\bibitem{Dain06c}
S.~Dain.
\newblock Proof of the angular momentum-mass inequality for axisymmetric black
  holes.
\newblock {\em J. Differential Geometry}, 79(1):33--67, 2008, gr-qc/0606105.

\bibitem{Garfinkle:2000hd}
D.~Garfinkle and G.~C. Duncan.
\newblock {Numerical evolution of Brill waves}.
\newblock {\em Phys. Rev.}, D63:044011, 2001, gr-qc/0006073.

\bibitem{Geroch71}
R.~Geroch.
\newblock A method for generating solutions of {E}instein's equations.
\newblock {\em J. Math. Phys.}, 12(6):918--924, 1971.

\bibitem{Gibbons06}
G.~W. Gibbons and G.~Holzegel.
\newblock The positive mass and isoperimetric inequalities for axisymmetric
  black holes in four and five dimensions.
\newblock {\em Class. Quant. Grav.}, 23:6459--6478, 2006, gr-qc/0606116.

\bibitem{Gradshteyn80}
I.~S. Gradshteyn and I.~M. Ryzhik.
\newblock {\em Table of integrals, series, and products}.
\newblock Academic Press, New York, 1980.

\bibitem{Maxwell04}
D.~Maxwell.
\newblock Rough solutions of the {E}instein constraint equations.
\newblock {\em J. Reine Angew. Math.}, 590:1--29, 2006, gr-qc/0405088.

\bibitem{Moncrief:08}
V.~Moncrief, 2006.
\newblock Private communication.

\bibitem{Nakamura87}
T.~{Nakamura}, K.~{Oohara}, and Y.~{Kojima}.
\newblock {General Relativistic Collapse to Black Holes and Gravitational Waves
  from Black Holes}.
\newblock {\em Progress of Theoretical Physics Supplement}, 90:1--218, 1987.

\bibitem{Regge74}
T.~Regge and C.~Teitelboim.
\newblock Role of surface integrals in the hamiltonian formulation of {G}eneral
  {R}elativity.
\newblock {\em Ann. Phys.}, 88:286, 1974.

\bibitem{Rendall:lrr-2005-6}
A.~D. Rendall.
\newblock Theorems on existence and global dynamics for the {E}instein
  equations.
\newblock {\em Living Reviews in Relativity}, 8(6), 2005.

\bibitem{Rinne:2005df}
O.~Rinne.
\newblock {Axisymmetric numerical relativity}.
\newblock 2005, gr-qc/0601064.

\bibitem{Rinnepriv}
O.~Rinne, 2008.
\newblock Private communication.

\bibitem{Rinne:2008tk}
O.~Rinne.
\newblock {Constrained evolution in axisymmetry and the gravitational collapse
  of prolate Brill waves}.
\newblock 2008, arXiv:0802.3791 [gr-qc].

\bibitem{Rinne:2005sk}
O.~Rinne and J.~M. Stewart.
\newblock {A strongly hyperbolic and regular reduction of Einstein's equations
  for axisymmetric spacetimes}.
\newblock {\em Class. Quant. Grav.}, 22:1143--1166, 2005, gr-qc/0502037.

\bibitem{Ruiz:2007rs}
M.~Ruiz, M.~Alcubierre, and D.~Nunez.
\newblock {Regularization of spherical and axisymmetric evolution codes in
  numerical relativity}.
\newblock {\em Gen. Rel. Grav.}, 40:159--182, 2008, 0706.0923.

\bibitem{Szabados03}
L.~B. Szabados.
\newblock On the roots of the {P}oincar\'e structure of asymptotically flat
  spacetimes.
\newblock {\em Class. Quantum. Grav.}, 20(13):2627--2661, 2003.

\bibitem{Szabados06}
L.~B. Szabados.
\newblock The {P}oincar\'e structure and the centre-of-mass of asymptotically
  flat spacetimes.
\newblock In {\em Analytical and numerical approaches to mathematical
  relativity}, volume 692 of {\em Lecture Notes in Phys.}, pages 157--184.
  Springer, Berlin, 2006.

\bibitem{Wald84}
R.~M. Wald.
\newblock {\em General Relativity}.
\newblock The University of Chicago Press, Chicago, 1984.

\end{thebibliography}

\end{document}